\documentclass[11pt,a4paper]{article}

\usepackage[a4paper,margin=26mm,headheight=22pt]{geometry}
\usepackage[T1]{fontenc}
\usepackage[utf8]{inputenc}
\usepackage{amsmath,amssymb,amsthm,mathtools}
\usepackage{newtxtext,newtxmath}
\usepackage{microtype}
\usepackage{setspace}
\usepackage{bm}
\usepackage{booktabs,longtable,tabularx,array,multirow}
\usepackage{enumitem}
\usepackage{xcolor}
\usepackage{tikz}
\usetikzlibrary{arrows.meta,positioning,fit,calc,decorations.pathreplacing}
\usepackage[most]{tcolorbox}
\usepackage{listings}
\usepackage{caption}
\usepackage{float}
\usepackage{fancyhdr}
\usepackage{titlesec}
\usepackage{hyperref}
\usepackage[nameinlink,capitalise,noabbrev]{cleveref}
\crefname{appendix}{appendix}{appendices}
\Crefname{appendix}{Appendix}{Appendices}
\usepackage{bookmark}
\usepackage{url}
\usepackage{etoolbox}
\usepackage{pdflscape}

\definecolor{navy}{HTML}{19324D}
\definecolor{teal}{HTML}{147D82}
\definecolor{sky}{HTML}{EAF4F5}
\definecolor{gold}{HTML}{B57A16}
\definecolor{softgold}{HTML}{FBF3E2}
\definecolor{ink}{HTML}{1D2733}
\definecolor{muted}{HTML}{566573}
\definecolor{codebg}{HTML}{F4F6F8}
\definecolor{rulegray}{HTML}{D8DEE5}

\hypersetup{
  colorlinks=true,
  linkcolor=navy,
  citecolor=teal,
  urlcolor=teal,
  pdftitle={Six Arbitrary Two-Qubit Gates Are Exactly the Threshold for Local Universality on Three Qubits},
  pdfauthor={Anonymous research draft},
  pdfsubject={Exact support-word dimension law with finite-field certificates},
  pdfkeywords={quantum circuit synthesis, two-qubit gates, SU(8), local universality, accessible dimension, finite-field certificate}
}

\renewcommand{\headrulewidth}{0.35pt}
\renewcommand{\headrule}{\hbox to\headwidth{\color{rulegray}\leaders\hrule height \headrulewidth\hfill}}

\titleformat{\section}{\Large\bfseries\color{navy}}{\thesection}{0.7em}{}
\titleformat{\subsection}{\large\bfseries\color{navy}}{\thesubsection}{0.7em}{}
\titleformat{\subsubsection}{\normalsize\bfseries\color{teal}}{\thesubsubsection}{0.7em}{}
\titlespacing*{\section}{0pt}{2.3ex plus .5ex minus .2ex}{1.2ex}
\titlespacing*{\subsection}{0pt}{1.9ex plus .4ex minus .2ex}{0.8ex}

\newtheoremstyle{paperthm}
  {1.0em}{1.0em}{\itshape}{} {\bfseries\color{navy}}{.}{0.55em}{}
\theoremstyle{paperthm}
\newtheorem{theorem}{Theorem}[section]
\newtheorem{lemma}[theorem]{Lemma}
\newtheorem{proposition}[theorem]{Proposition}
\newtheorem{observation}[theorem]{Observation}
\newtheorem{fact}[theorem]{Fact}

\theoremstyle{definition}
\newtheorem{definition}[theorem]{Definition}

\crefname{theorem}{theorem}{theorems}
\crefname{lemma}{lemma}{lemmas}
\crefname{proposition}{proposition}{propositions}
\crefname{observation}{observation}{observations}
\crefname{fact}{fact}{facts}
\crefname{corollary}{corollary}{corollaries}
\crefname{definition}{definition}{definitions}
\crefname{example}{example}{examples}
\crefname{remark}{remark}{remarks}

\newcommand{\C}{\mathbb C}
\newcommand{\R}{\mathbb R}
\newcommand{\Q}{\mathbb Q}

\newcommand{\I}{\mathrm i}

\newcommand{\SU}{\mathrm{SU}}
\newcommand{\U}{\mathrm{U}}
\newcommand{\Ad}{\operatorname{Ad}}
\newcommand{\rank}{\operatorname{rank}}
\newcommand{\tr}{\operatorname{tr}}

\newcommand{\e}{\mathrm e}

\newcommand{\Circuit}{\mathcal C}
\newcommand{\Local}{\mathcal K}
\newcommand{\Hilb}{\mathcal H}
\newcommand{\Lie}{\mathfrak h}

\newcommand{\word}[1]{\texttt{#1}}

\newcolumntype{Y}{>{\raggedright\arraybackslash}X}
\newcolumntype{C}[1]{>{\centering\arraybackslash}p{#1}}
\newcolumntype{L}[1]{>{\raggedright\arraybackslash}p{#1}}

\tcbset{
  enhanced,
  boxrule=0.6pt,
  arc=2mm,
  left=2.2mm,right=2.2mm,top=1.8mm,bottom=1.8mm,
  before skip=1.0em,after skip=1.0em
}
\newtcolorbox{publicbox}{colback=sky,colframe=teal,title=Plain-language result for non-specialists,fonttitle=\bfseries}
\newtcolorbox{scopebox}{colback=softgold,colframe=gold,title={What is proved, and what is not},fonttitle=\bfseries}
\newtcolorbox{keybox}{colback=white,colframe=navy,title=Main theorem,fonttitle=\bfseries}
\newtcolorbox{auditbox}{colback=codebg,colframe=muted,title=Public-record novelty statement,fonttitle=\bfseries}

\lstdefinelanguage{json}{
  basicstyle=\ttfamily\scriptsize,
  numbers=none,
  numberstyle=\tiny\color{muted},
  stepnumber=1,
  numbersep=8pt,
  showstringspaces=false,
  breaklines=true,
  breakatwhitespace=false,
  frame=single,
  rulecolor=\color{rulegray},
  backgroundcolor=\color{codebg},
  string=[s]{\"}{\"},
  stringstyle=\color{teal},
  comment=[l]{//},
  commentstyle=\color{muted}\itshape,
  morecomment=[s]{/*}{*/},
  literate=
   *{0}{{{0}}}{1}{1}{{{1}}}{1}{2}{{{2}}}{1}{3}{{{3}}}{1}{4}{{{4}}}{1}
    {5}{{{5}}}{1}{6}{{{6}}}{1}{7}{{{7}}}{1}{8}{{{8}}}{1}{9}{{{9}}}{1}
}
\begin{document}

\title{Every architecture of six two-qubit gates is locally universal\\on three qubits}

\author{
    Hyunho Cha and Jungwoo Lee\\
    \small NextQuantum Innovation Research Center\\
    \small Seoul National University, Seoul 08826, Republic of Korea\\
    \small \texttt{\{ovalavo, junglee\}@snu.ac.kr}
}

\date{}

\maketitle

\pagenumbering{roman}

\begin{abstract}
We present the \emph{first} analytical determination of the exact accessible dimension for every fixed architecture of arbitrary two-qubit gates on three qubits.  A support word represents which pair of qubits each two-qubit gate acts on, and its reduced length is obtained by merging consecutive gates on the same pair.  If the reduced length is $r$, then the set of implementable three-qubit unitaries has accessible dimension
$
d(w)=
\min\{63,9r+9\}.
$
Consequently, six arbitrary two-qubit gates are \emph{necessary and sufficient for local universality}: every fixed architecture reaches a nonempty open subset of $\SU(8)$ when its reduced length is at least six.  The result is stronger than existence of one favorable architecture.  Every reduced six-slot support word is locally universal, including the alternating nearest-neighbor line $AB,BC,AB,BC,AB,BC$.
The upper bound follows from a standard parameter-counting argument.  The matching lower bounds are proved by Jacobian certificates in the Pauli basis.  Up to qubit relabeling and reversal, there are $22$ reduced architectures of lengths two through six.  For each one, a product of rational Pauli rotations yields a nonzero maximal minor modulo the prime $1{,}000{,}003$.
Thus, any obstruction to global universality with six two-qubit gates must go beyond parameter counting, connectivity, and differential rank.
\end{abstract}

\tableofcontents
\clearpage
\pagenumbering{arabic}

\section{Introduction}

A quantum program is described mathematically as a product of unitary matrices, but hardware does not directly execute an arbitrary large unitary matrix.  A compiler must instead express the desired operation as a circuit of elementary operations.  One-qubit operations act on individual quantum registers, whereas two-qubit operations create and manipulate correlations between registers and are generally the costly resource in the circuit model.  Practical implementations seek to minimize the two-qubit count because such operations are typically slower and less reliable than one-qubit operations \cite{palsberg2024optimal}.  The same compiler motivation becomes concrete for Toffoli gates and nearest-neighbor hardware \cite{huang2026toffoli}.

We study the most expressive idealized two-qubit operation, which is an arbitrary element of $\SU(4)$.  Counting arbitrary two-qubit gates is different from counting CNOT or CZ gates.  Rather than fixing a particular entangling primitive, we ask for the smallest number of fully tunable two-qubit blocks required to synthesize three-qubit operations.  This model is relevant when a hardware instruction can directly realize a broad $\SU(4)$ family \cite{chen2024one}.  It also provides a gate-set-independent benchmark against which compilers based on more restricted native gates can be measured.

The target space for three qubits is $\SU(8)$, a smooth manifold of real dimension $63$.  A naive parameter count might assign $15$ parameters to each arbitrary two-qubit gate, but this overcounts the effective degrees of freedom because many one-qubit directions can be moved between neighboring blocks without changing the implemented three-qubit operation.  Once these redundancies are removed, each additional arbitrary two-qubit gate contributes at most nine independent directions, in addition to nine free one-qubit directions.  Consequently, a circuit containing $m$ arbitrary two-qubit gates can have accessible dimension at most
$
9m+9.
$
For $m=5$ this gives $54<63$, whereas for $m=6$ it gives $63$.  This is the parameter-count lower bound underlying the observation that almost every three-qubit unitary requires at least six arbitrary two-qubit gates \cite{yu2013optimal, yu2015optimal}.

The resulting threshold question has remained unresolved at the level of an exact proof:
\begin{quote}
\emph{Does a six-slot architecture merely have the correct parameter count, or can its Jacobian actually attain full rank $63$?}
\end{quote}
The distinction is essential.  A parameter count gives only a necessary condition for local universality. Nonlinear dependencies among the circuit parameters could prevent a nominally $63$-parameter architecture from attaining all $63$ tangent directions.  Thus proving that six gates suffice locally requires an explicit full-rank point, not merely a count of parameters or a numerical optimization experiment.

There is substantial evidence that six is the correct threshold.  DiVincenzo and Smolin reported numerical six-gate synthesis for arbitrary three-qubit unitaries \cite{divincenzo1994results}, and Barenco et al. subsequently repeated this conclusion in their discussion of those numerical investigations \cite{barenco1995elementary}.  Yu and Ying later established the rigorous six-gate lower bound but explicitly left six-gate sufficiency unresolved \cite{yu2013optimal, yu2015optimal}.  More recently, Chen et al. proved a global upper bound of eleven arbitrary two-qubit gates for every three-qubit unitary and reported numerical evidence at six \cite{chen2024one}.

Our contribution is to close the corresponding \emph{local} six-gate problem, and to do so uniformly over the possible support architectures.  A \emph{support word} represents the pair of qubits acted on at each two-qubit slot.  For example,
$
w=AB,BC,AB,BC,AB,BC
$
is a six-gate alternating architecture on the fixed path $A-B-C$.  Consecutive repetitions of the same support add no expressive power, because two arbitrary gates acting successively on the same pair multiply to a single arbitrary two-qubit gate.  We therefore reduce a support word by merging consecutive equal supports and denote its reduced length by $r(w)$.

We prove that, for every support word $w$, its exact accessible dimension is
$
d(w)=\min\{63,9r(w)+9\}.
$
In particular, a fixed architecture is locally universal if and only if $r(w)\ge 6$.  Thus six arbitrary two-qubit gates are the exact threshold for local universality on three qubits, and this threshold does not require all-to-all connectivity: every reduced six-slot support pattern is full-dimensional, including every fixed nearest-neighbor alternating architecture on a line.  Moreover, the result implies the existence of an open, positive-measure family of three-qubit unitaries whose $\SU(4)$ cost is at most six.

The proof combines a general analytic reduction with computer-assisted finite computation. The resulting certificates are exact algebraic certificates rather than floating-point singular-value tests.

This conclusion is stronger than what follows from existing results for special families. Palsberg and Yu characterize the $\SU(4)$ cost of gates with two controls, $CC(U)$ \cite{palsberg2024optimal}, while Huang and Palsberg classify three-qubit diagonal gates and show, among other results, that Toffoli requires six nearest-neighbor arbitrary two-qubit gates \cite{huang2026toffoli}. These are exact and connectivity-aware results, but they concern low-dimensional families and therefore do not imply that a general six-block architecture has a nonzero $63\times63$ Jacobian minor.  Likewise, the accessible-dimension framework and general growth bounds of Haferkamp et al. \cite{haferkamp2022linear} supply the appropriate dimension-count perspective but do not determine the exact rank of every three-qubit six-slot support architecture.

Our theorem therefore resolves the local obstruction at the six-gate threshold: there is no hidden differential obstruction beyond the parameter count, for any reduced support architecture.  It does not assert that every element of $\SU(8)$ has a six-gate decomposition.  That is a genuinely global synthesis question. The remaining gap is therefore global rather than local.

\section{Prior work}

\subsection{Arbitrary two-qubit blocks and accessible dimension}

Work on three-qubit synthesis with fully general two-qubit gates dates back to the early numerical investigations of DiVincenzo and Smolin, who reported that six two-qubit gates appeared sufficient for arbitrary three-qubit unitaries \cite{divincenzo1994results}.  Barenco et al. later mentioned this conclusion in their broader study of elementary quantum gates, again presenting it as the outcome of numerical investigation rather than as an analytic sufficiency theorem \cite{barenco1995elementary}.  These early results therefore anticipated the significance of the number six, but did not establish it through an exact rank calculation.

A rigorous lower bound was obtained by Yu and Ying.  They proved that, without ancillas, almost every $n$-qubit unitary requires at least
\[
\left\lceil\frac{4^n-3n-1}{9}\right\rceil
\]
two-qubit gates \cite{yu2013optimal}.  For three qubits this evaluates to six.  Their analysis thus identifies six as the first gate count not ruled out by dimension, but it does not prove attainability at that count. Indeed, they explicitly raised six-gate sufficiency as an open question.  Their subsequent work also contains results for particular three-qubit gates, including controlled-controlled operations and the Fredkin gate \cite{yu2015optimal}.

A broader geometric framework was developed by Haferkamp et al., who formulated the notion of accessible dimension for parameterized circuit architectures and derived general bounds on its growth \cite{haferkamp2022linear}.  In the present three-qubit setting, their parameter-counting framework yields the familiar upper bound $9m+9$ for an architecture with $m$ arbitrary two-qubit gates.  What is not determined by such a general bound is whether equality is achieved for a given support pattern.  In particular, it does not settle the ranks of the finitely many reduced three-qubit architectures at the critical length six.

Chen et al. approached the problem from the complementary direction of constructive global synthesis \cite{chen2024one}.  They gave a decomposition of every three-qubit unitary using at most eleven generic two-qubit gates and also presented numerical optimization results suggesting that six gates may already suffice.  The eleven-gate construction and the six-gate numerical observations answer different questions. The former is a rigorous worst-case upper bound, while the latter does not certify nonsingularity of the circuit map at any particular parameter point.  Consequently, neither result determines whether the six-gate parameter threshold is actually attained at the differential level.

\subsection{Special three-qubit families and nearest-neighbor synthesis}

Gate counts are known for several important subclasses of three-qubit unitaries.  Palsberg and Yu analyze the family $CC(U)$ and obtain a complete characterization of its $\SU(4)$ gate complexity \cite{palsberg2024optimal}.  This classification is exact, but it concerns a constrained family inside the full three-qubit unitary group.

Huang and Palsberg study analogous questions under nearest-neighbor restrictions and obtain results for three-qubit diagonal gates \cite{huang2026toffoli}.  They derive a classification of the relevant diagonal family and show, among other consequences, that Toffoli has nearest-neighbor $\SU(4)$ cost six.  Such results demonstrate that six gates can be necessary and sufficient for individual gates or restricted families.  They do not, however, imply local universality of a six-slot architecture.

This distinction is particularly relevant for nearest-neighbor circuits.  Establishing a six-gate implementation of a designated gate such as Toffoli requires producing one target point in the circuit image.  Establishing local universality instead requires showing that the image has full dimension in a neighborhood of some point.  The latter is therefore a substantially different statement from synthesis within a special family.

\subsection{Restricted entanglers and other resource models}

Another large body of synthesis literature uses fixed entanglers, most commonly CNOT or CZ, as the elementary two-qubit resource.  Barenco et al. established foundational universality constructions in such elementary gate models \cite{barenco1995elementary}, and later synthesis methods for general multiqubit unitaries were developed by Vatan and Williams and by Shende et al. \cite{vatan2004realization, shende2005synthesis}.  Although closely related in motivation, these results optimize a different resource.  A fixed entangling gate does not carry the continuously tunable nonlocal degrees of freedom of an arbitrary $\SU(4)$ operation, and the associated parameter-counting thresholds are correspondingly different.  For example, the standard dimensional lower bound for generic three-qubit synthesis is fourteen CNOTs rather than six $\SU(4)$ gates.

Wierichs et al. considered parameter-optimal nearest-neighbor brick-wall constructions for three-qubit unitaries using fourteen CZ gates \cite{wierichs2025unitary}.  Their numerical investigations support universality of the proposed architecture, while a complete analytic proof is left open.  Because their nonlocal layers are fixed CZ gates and the free parameters reside in one-qubit rotations, the differential structure of their ansatz is different from that of a circuit whose two-qubit blocks themselves range over $\SU(4)$.

Kottmann et al. subsequently developed flag-decomposition techniques aimed at parameter efficiency and fault-tolerant resource measures \cite{kottmann2026parameter}.  Their optimization criteria include quantities such as rotation counts and therefore do not translate directly into $\SU(4)$ block complexity.

\section{Preliminaries}

A single qubit is modeled by the two-dimensional complex vector space $\C^2$.  A pure state is a unit vector
\[
 \lvert\psi\rangle=\alpha\lvert0\rangle+\beta\lvert1\rangle,
 \qquad |\alpha|^2+|\beta|^2=1.
\]
The basis vectors are $\lvert0\rangle=(1,0)^T$ and $\lvert1\rangle=(0,1)^T$.  Three qubits use the tensor product
\[
 \Hilb=(\C^2)^{\otimes3}\cong\C^8,
\]
with computational basis $\lvert abc\rangle=\lvert a\rangle\otimes\lvert b\rangle\otimes\lvert c\rangle$ for $a,b,c\in\{0,1\}$.

A matrix $U\in\C^{N\times N}$ is unitary when $U^\dagger U=UU^\dagger=I_N$, where $U^\dagger$ is the conjugate transpose.  Unitary matrices preserve inner products and therefore preserve total probability.  We write
\[
 \U(N)=\{U:U^\dagger U=I_N\},\qquad
 \SU(N)=\{U\in\U(N):\det U=1\}.
\]
A global scalar phase $\e^{\I\phi}I$ has no observable effect on a closed quantum system.  Every $U\in\U(N)$ is a scalar phase times an element of $\SU(N)$.  We therefore work in $\SU(N)$.  If matrix equality including global phase is desired, a free one-qubit phase gate can restore that phase without using an additional two-qubit gate.

\begin{lemma}\label{lem:sudim}
The real dimension of $\SU(N)$ is $N^2-1$.  In particular,
\[
 \dim\SU(2)=3,\qquad \dim\SU(4)=15,\qquad \dim\SU(8)=63.
\]
\end{lemma}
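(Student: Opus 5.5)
The plan is to identify $\SU(N)$ as a smooth manifold (indeed a Lie group) and compute its dimension as that of its tangent space at the identity, i.e.\ its Lie algebra. First, $\U(N)$ is a closed subgroup of $\mathrm{GL}(N,\C)$, so by the closed-subgroup theorem it is an embedded Lie group. The same holds for $\SU(N)$, since it is the kernel of the continuous homomorphism $\det:\U(N)\to\U(1)$.

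Next I will determine the Lie algebra. A curve $U(t)$ with $U(0)=I$ satisfies $U^\dagger U=I$. Differentiating at $t=0$ gives $X^\dagger+X=0$, and conversely $\exp(tX)$ is unitary for every skew-Hermitian $X$. So $\mathfrak u(N)$ is the space of skew-Hermitian matrices. For the determinant, I will use $\det\exp(tX)=\e^{t\tr X}$, which is constant equal to $1$ exactly when $\tr X=0$. Hence $\mathfrak{su}(N)=\{X: X^\dagger=-X,\ \tr X=0\}$.

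The remaining step is a count of real parameters. A skew-Hermitian matrix has purely imaginary diagonal entries, giving $N$ real parameters. Its strictly upper-triangular entries are arbitrary complex numbers, giving $2\cdot N(N-1)/2$ real parameters, and these determine the lower-triangular entries. This totals $N^2$. The trace condition is a single nontrivial real linear constraint on the imaginary diagonal, since it is surjective onto $\I\R$. Therefore $\dim\mathfrak{su}(N)=N^2-1$.

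Equivalently, one can exhibit a basis: for $N=2^n$ the $4^n-1$ non-identity Pauli strings multiplied by $\I$ form one, which is the basis the paper later uses for its Jacobians. Substituting $N=2,4,8$ then gives $3$, $15$ and $63$.

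There is no real obstacle here. The only point needing care is justifying that the manifold dimension equals the dimension of the computed tangent space. That requires $\SU(N)$ to be an embedded submanifold, which the closed-subgroup theorem supplies. Alternatively, one can check that $1$ is a regular value of $\det:\U(N)\to\U(1)$: its derivative at $U$ is $UX\mapsto \tr X$, which is onto $\I\R$. This gives $\dim\SU(N)=\dim\U(N)-1=N^2-1$.
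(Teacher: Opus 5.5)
Your proof is correct and complete. The paper states this lemma without proof, as a standard fact, so there is nothing to compare it against. Your argument is the standard justification: the closed-subgroup theorem, the identification of $\mathfrak{su}(N)$ with the traceless skew-Hermitian matrices, and the real parameter count, with the regular-value argument for $\det:\U(N)\to\U(1)$ as an equivalent route. Your remark that $\I$ times the $4^n-1$ non-identity Pauli strings is a basis is exactly the identification the paper uses in \Cref{lem:paulibasis} to coordinatize the tangent space of $\SU(8)$ by $\R^{63}$.
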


\subsection{The Pauli basis}

Define the four $2\times2$ matrices
\[
 I=\begin{pmatrix}1&0\\0&1\end{pmatrix},\quad
 X=\begin{pmatrix}0&1\\1&0\end{pmatrix},\quad
 Y=\begin{pmatrix}0&-\I\\\I&0\end{pmatrix},\quad
 Z=\begin{pmatrix}1&0\\0&-1\end{pmatrix}.
\]
A \emph{three-qubit Pauli string} is a tensor product $P_A\otimes P_B\otimes P_C$ with each factor in $\{I,X,Y,Z\}$.  We abbreviate, for example, $X\otimes I\otimes Z$ as $XIZ$.

\begin{lemma}\label{lem:paulibasis}
The $64$ three-qubit Pauli strings form an orthogonal real basis of the Hermitian $8\times8$ matrices under the inner product
\[
 \langle A,B\rangle=\frac18\tr(AB).
\]
The $63$ nonidentity strings form an orthonormal basis of the traceless Hermitian matrices, which is used to identify the tangent space of $\SU(8)$ with $\R^{63}$.
\end{lemma}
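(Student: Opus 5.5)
The plan is to first settle the single-qubit statement and then pass to three qubits by tensor products. On $\C^2$, the four matrices $I,X,Y,Z$ are Hermitian and pairwise anticommute among the non-identity ones. Multiplication follows $XY=\I Z$ together with its cyclic versions, and each matrix squares to $I$. From these relations I will check directly that $\tr(\sigma_a\sigma_b)=2\delta_{ab}$: the products $\sigma_a\sigma_b$ with $a\neq b$ are $\pm\I$ times a nonidentity Pauli, or the nonidentity Pauli itself when one factor is $I$. Every nonidentity Pauli is traceless.

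For three qubits, each string $P=P_A\otimes P_B\otimes P_C$ is Hermitian because it is a tensor product of Hermitian matrices. The trace factorizes as $\tr(PQ)=\tr(P_AQ_A)\tr(P_BQ_B)\tr(P_CQ_C)$, so it equals $8$ when $P=Q$ and $0$ otherwise. Hence the $64$ strings are orthonormal, and in particular orthogonal, under $\langle A,B\rangle=\frac18\tr(AB)$.

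This form is real-valued and positive definite on Hermitian matrices, since $\tr(AB)=\overline{\tr(BA)}=\overline{\tr(AB)}$ and $\tr(A^2)=\sum_{ij}|A_{ij}|^2$. Orthonormality therefore gives linear independence over $\R$. The real space of $8\times8$ Hermitian matrices has dimension $8+2\binom82=64$, so the $64$ strings form a basis. Any Hermitian $H$ expands as $H=\sum_P c_P P$ with real coefficients $c_P=\frac18\tr(HP)$.

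For the second claim, $\tr P=\tr P_A\tr P_B\tr P_C$ vanishes unless $P=III$. So the $63$ nonidentity strings lie in the traceless subspace, which has dimension $63$ because it is the kernel of the nonzero functional $\tr$. They are orthonormal there, hence they form a basis. Finally, the tangent space of $\SU(8)$ at the identity is the Lie algebra $\mathfrak{su}(8)=\{\I H: H^\dagger=H,\ \tr H=0\}$, which is consistent with the dimension $63$ in \cref{lem:sudim}. The tangent space at any point $U$ is obtained by right-translating this algebra, $U\mapsto \I HU$. The map $(c_P)\mapsto \I\sum_P c_PP$ then gives the identification with $\R^{63}$.

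There is no real obstacle in this argument. The only point needing care is the field: orthogonality must be taken over $\R$, and the dimension count must be done over $\R$, because over $\C$ the Hermitian matrices do not form a vector space.
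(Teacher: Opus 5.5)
Your proof is correct and complete. The paper states this lemma without proof, as a standard fact, and your argument (single-qubit trace relations, factorization of the trace over tensor factors, the real dimension count $8+2\binom{8}{2}=64$, and the trace functional for the traceless part) is the standard justification behind it.

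The only difference is a harmless sign convention. The paper's right-trivialization $\tau_U(\dot U)=\I\dot U U^\dagger$ pairs $H$ with $\dot U=-\I HU$, whereas your map $(c_P)\mapsto \I\sum_P c_P P$ pairs it with $\dot U=\I HU$. Your coordinates are therefore the negatives of the paper's, which changes no basis property and no rank.
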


\subsection{Smooth maps, differentials, and local universality}

A smooth matrix-valued map can be differentiated entry by entry.  At a point $x$, its differential $DF_x$ is the linear map that gives the first-order change in the output caused by a small change in the input.  After choosing local coordinates, $DF_x$ is the Jacobian matrix.  Its rank is independent of the coordinates chosen.

For a tangent vector $\dot U$ at $U\in\SU(8)$, define the right-trivialized Hermitian representative
\[
 \tau_U(\dot U)=\I\dot U U^\dagger.
\]
Because $\dot U U^\dagger$ is skew-Hermitian and traceless, $\tau_U(\dot U)$ is Hermitian and traceless.  Expanding it in the Pauli basis of \Cref{lem:paulibasis} gives a vector in $\R^{63}$.

\begin{definition}\label{def:accessible}
For a fixed circuit architecture with smooth circuit map $\Phi$, its \emph{accessible dimension} is
\[
 d=\max_x\rank D\Phi_x.
\]
The architecture is \emph{locally universal} if its reachable set contains a nonempty open subset of $\SU(8)$.
\end{definition}

By the inverse function theorem, full differential rank $63$ implies local universality.

\section{Circuit model and support words}\label{sec:circuitmodel}

\subsection{Embedded two-qubit gates}

Label the qubits $A,B,C$.  The possible two-qubit supports are
\[
 E=\{AB,AC,BC\}.
\]
For $V\in\SU(4)$, write $V^{AB}=V\otimes I_2$ and $V^{BC}=I_2\otimes V$.  The embedding $V^{AC}$ is obtained by swapping $B$ and $C$, applying $V$ to the first two tensor factors, and swapping back.  Each support gives a subgroup
\[
 H_e=\{V^e:V\in\SU(4)\}\subseteq\SU(8),\qquad e\in E.
\]
Its tangent space at the identity, represented by traceless Hermitian matrices, is denoted $\Lie_e$ and has dimension $15$.

Free one-qubit gates form the local subgroup
\[
 \Local=\{A\otimes B\otimes C:A,B,C\in\SU(2)\}\subseteq\SU(8).
\]
Its Lie algebra has dimension nine.

\begin{definition}[Support word and circuit map]
A support word is a finite sequence
\[
 w=e_1e_2\cdots e_m,\qquad e_j\in E.
\]
The index $j=1$ denotes the earliest gate, so the matrix product is written in reverse chronological order.  With free one-qubit gates, the normal-form circuit map is
\[
 \Phi_w:\Local\times\SU(4)^m\longrightarrow\SU(8),\qquad
 \Phi_w(L,U_1,\ldots,U_m)=L U_m^{e_m}\cdots U_1^{e_1}.
\]
Its image is denoted $\Circuit_w$ and its accessible dimension is $d(w)$.
\end{definition}

\begin{fact}\label{prop:normalform}
A circuit with arbitrary one-qubit gates before, after, and between the two-qubit slots of $w$ has the same reachable set $\Circuit_w$ as the normal-form map above.
\end{fact}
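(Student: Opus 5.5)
The plan is to prove the two inclusions between the reachable sets separately, with the nontrivial one handled by absorbing one-qubit gates into neighbouring two-qubit slots. Write a general circuit with interleaved one-qubit gates as
\[
 L_m U_m^{e_m} L_{m-1} U_{m-1}^{e_{m-1}} \cdots L_1 U_1^{e_1} L_0,
 \qquad L_0,\ldots,L_m\in\Local,\ U_j\in\SU(4).
\]
The normal form is the special case $L_0=\cdots=L_{m-1}=I$, so $\Circuit_w$ is contained in the reachable set of the general circuit. For the reverse inclusion, I would show that any such product can be rewritten as $L' U_m'^{e_m}\cdots U_1'^{e_1}$ with $L'\in\Local$ and $U_j'\in\SU(4)$.

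The key observation is that every $L=A\otimes B\otimes C\in\Local$ factors, for each support $e\in E$, as $L=L_{\bar e}\,W^{e}$. Here $W^e$ is the embedding of the tensor product of the two one-qubit factors acting on the qubits of $e$; this is an element of $\SU(2)\otimes\SU(2)\subseteq\SU(4)$. The remaining factor $L_{\bar e}$ is the one-qubit gate on the complementary qubit, tensored with identities. The two factors commute. For $e=AC$ one must check that $W^{AC}$, defined through the swap conjugation, is exactly $A\otimes I\otimes C$; this follows directly from the definition of the embedding.

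With this factorization, I would push one-qubit gates to the end of the circuit by induction, starting from the earliest gate. Begin with $U_1^{e_1}L_0$. Write $L_0=(L_0)_{\bar e_1}W^{e_1}$; then
\[
 U_1^{e_1}L_0=(L_0)_{\bar e_1}\,(U_1W)^{e_1}.
\]
This holds because $(L_0)_{\bar e_1}$ acts on the qubit outside $e_1$ and therefore commutes with $U_1^{e_1}$. Replacing $U_1$ by $U_1W\in\SU(4)$ leaves a leftover one-qubit gate $(L_0)_{\bar e_1}$ sitting between slot $1$ and slot $2$. Merge it with $L_1$; the result is still in $\Local$. Repeat the same step at slot $2$, and so on, up to slot $m$. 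At the end, all one-qubit factors have been collected into a single element of $\Local$ on the far left, which gives the normal form.

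Because $\Local$ is a group and $H_e$ is a group, each step preserves membership in the required sets, so no step is expected to be a real obstacle. The only point requiring care is bookkeeping: the $AC$ embedding and the reverse-chronological ordering convention. I would state the commutation $[\,L_{\bar e},V^e\,]=0$ explicitly, since it is what justifies moving the leftover factor past the slot.
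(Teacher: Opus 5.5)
Your proposal is correct and matches the paper's argument. You split each local layer into a part supported on $e$, which is absorbed into the adjacent gate, and a spectator part, which commutes with $V^e$ and is moved leftward; iterating from the earliest slot leaves a single leftmost layer in $\Local$. You also spell out the trivial inclusion and check that the $AC$ embedding of $A\otimes C$ is $A\otimes I\otimes C$, both of which the paper leaves implicit.
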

\begin{proof}
Consider one two-qubit gate $V^e$ and a local gate $K=A\otimes B\otimes C$ immediately to its right. Let $s$ be the spectator qubit not contained in
$e$, and write $K=K_eK_s$, where $K_e$ acts on the two qubits in $e$ and
$K_s$ on $s$. Since $K_s$ commutes with $V^e$,
$
V^eK=K_s(V^eK_e),
$
with $V^eK_e\in H_e$. Thus the supported part of each local layer can be
absorbed into the adjacent two-qubit gate, while its spectator part is
moved to the left. Iterating through the circuit leaves a single
leftmost local layer.
\end{proof}

\subsection{Run reduction}

\begin{definition}
Delete every support letter that is equal to the immediately preceding surviving letter.  The resulting word is the \emph{run reduction} $\bar w$, and
\[
 r(w)=|\bar w|
\]
is the reduced length.
\end{definition}

For example,
\[
 AB,AB,BC,BC,AB\longmapsto AB,BC,AB.
\]

\begin{fact}\label{prop:runreduction}
For every support word $w$,
\[
 \Circuit_w=\Circuit_{\bar w},\qquad d(w)=d(\bar w).
\]
\end{fact}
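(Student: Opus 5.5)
The plan is to show that merging two adjacent slots with the same support changes neither the reachable set nor the accessible dimension, and then to iterate. The only input is that each $H_e$ is a subgroup of $\SU(8)$, which holds because $V\mapsto V^e$ is a group homomorphism $\SU(4)\to\SU(8)$. This is clear for $AB$ and $BC$. For $AC$ it follows because conjugation by the $B$--$C$ swap is an automorphism.

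\textbf{Equality of reachable sets.} Suppose $e_j=e_{j+1}=e$, and let $w'$ be $w$ with the letter $e_{j+1}$ deleted. Then
\[
U_{j+1}^{e}U_j^{e}=(U_{j+1}U_j)^{e}.
\]
So every point of $\Circuit_w$ is obtained from $\Phi_{w'}$ by using $U_{j+1}U_j$ in slot $j$. This gives $\Circuit_w\subseteq\Circuit_{w'}$. Conversely, setting $U_{j+1}=I$ in $\Phi_w$ reproduces $\Phi_{w'}$, so $\Circuit_{w'}\subseteq\Circuit_w$. Applying this step once per deleted letter, by induction on $|w|-r(w)$, yields $\Circuit_w=\Circuit_{\bar w}$.

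\textbf{Equality of accessible dimensions.} Equal images do not by themselves force equal maximal ranks, so I will argue at the level of maps instead. Define the smooth multiplication map
\[
\mu:\Local\times\SU(4)^m\to\Local\times\SU(4)^{m-1},
\]
which replaces the pair $(U_j,U_{j+1})$ by $U_{j+1}U_j$ and leaves the other factors unchanged. Then $\Phi_w=\Phi_{w'}\circ\mu$, and the chain rule gives $\rank D(\Phi_w)_x\le\rank D(\Phi_{w'})_{\mu(x)}$, hence $d(w)\le d(w')$.

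For the reverse inequality, let $\iota$ be the smooth embedding that inserts $U_{j+1}=I$. Then $\Phi_{w'}=\Phi_w\circ\iota$, so $\rank D(\Phi_{w'})_y\le\rank D(\Phi_w)_{\iota(y)}$, hence $d(w')\le d(w)$.

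Together these give $d(w)=d(w')$. Iterating the single-deletion step gives $d(w)=d(\bar w)$.

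\textbf{Main subtlety.} The only point needing care is the one just noted: equal images alone would only relate generic ranks through Sard-type arguments. The factorizations through $\mu$ and $\iota$ avoid this and give the equality directly from the chain rule.
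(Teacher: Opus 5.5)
Your proof is correct. It rests on the same observation the paper gives, namely $U_{j+1}^{e}U_j^{e}=(U_{j+1}U_j)^{e}$, though the paper states it only in the introduction and gives the Fact without proof. Your factorizations $\Phi_w=\Phi_{w'}\circ\mu$ and $\Phi_{w'}=\Phi_w\circ\iota$ cleanly supply the rank equality $d(w)=d(\bar w)$, which the paper leaves implicit.
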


\subsection{Qubit relabeling and reversal}

\begin{proposition}\label{prop:symmetry}
Accessible dimension and local universality are unchanged by either of the following operations:
\begin{enumerate}[label=(\roman*),leftmargin=2em]
\item applying any permutation of the qubits to every support letter;
\item reversing the order of the support word.
\end{enumerate}
\end{proposition}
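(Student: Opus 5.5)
For (i), I will realize each qubit permutation by a permutation unitary and conjugate the circuit map by it. Let $\pi$ be a permutation of $\{A,B,C\}$ and $P_\pi\in\U(8)$ the operator permuting tensor factors. If $\det P_\pi=-1$, I rescale it by a scalar phase so that conjugation acts as an automorphism of $\SU(8)$; the phase is irrelevant under conjugation. The key identities are
\[
P_\pi H_e P_\pi^\dagger=H_{\pi(e)},\qquad P_\pi\Local P_\pi^\dagger=\Local .
\]
The first holds because conjugating $V^e$ moves its action to the pair $\pi(e)$. To land back in the standard embedding $H_{\pi(e)}$, I may need to precompose with the qubit swap inside the pair, that is, replace $V$ by $\mathrm{SWAP}\,V\,\mathrm{SWAP}$, which is again an element of $\SU(4)$. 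The second holds because conjugation permutes the one-qubit factors. Therefore
\[
P_\pi\,\Phi_w(L,U_1,\dots,U_m)\,P_\pi^\dagger=\Phi_{\pi(w)}(L',U_1',\dots,U_m'),
\]
where $(L,U_j)\mapsto(L',U_j')$ is a diffeomorphism of the parameter space $\Local\times\SU(4)^m$. Conjugation by $P_\pi$ is a diffeomorphism of $\SU(8)$. Hence $\Circuit_{\pi(w)}=P_\pi\Circuit_wP_\pi^\dagger$. By the chain rule the ranks satisfy $\rank D\Phi_{\pi(w)}=\rank D\Phi_w$ at corresponding points, so $d$ is unchanged. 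Local universality is also preserved, because a diffeomorphism maps open sets to open sets.

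For (ii), I will use the inversion map $\iota(U)=U^\dagger=U^{-1}$, which is a diffeomorphism of $\SU(8)$. Applying it gives
\[
\Phi_w(L,U_1,\ldots,U_m)^\dagger=(U_1^\dagger)^{e_1}\cdots(U_m^\dagger)^{e_m}L^\dagger .
\]
This is a circuit with support word $w^{\mathrm{rev}}=e_m\cdots e_1$, but the local layer now sits on the right, which is the chronologically earliest position. By \Cref{prop:normalform}, a circuit with local gates in arbitrary positions has the same reachable set as the normal form, so $\iota(\Circuit_w)=\Circuit_{w^{\mathrm{rev}}}$. This settles local universality, since $\iota$ preserves open sets.

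For the accessible dimension I must compare ranks, which requires a bit more care, because \Cref{prop:normalform} only compares images. I plan the following. Let $\Psi$ be the map with the local layer on the right. It is related to $\Phi_{w^{\mathrm{rev}}}$ by the identity $\Psi=\Phi_{w^{\mathrm{rev}}}\circ\sigma$ for an explicit smooth surjection $\sigma$ that pushes spectator factors leftward, and symmetrically $\Phi_{w^{\mathrm{rev}}}=\Psi\circ\sigma'$. Then $\max\rank$ can only decrease under precomposition, and the two inequalities give equality. Combined with $\iota\circ\Phi_w=\Psi\circ(\text{inversion of parameters})$, this yields $d(w)=d(w^{\mathrm{rev}})$.

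The main obstacle is this rank bookkeeping when moving local layers. An alternative I would fall back on is the characterization of $d$ as the dimension of the image in the semialgebraic or analytic sense: the maximal rank of a real-analytic map equals the dimension of its image. Since $\Circuit_w$ and $\Circuit_{w^{\mathrm{rev}}}$ are diffeomorphic via $\iota$, their dimensions agree, which gives $d(w)=d(w^{\mathrm{rev}})$ directly.
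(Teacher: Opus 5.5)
Your proposal is correct and follows the paper's proof. For (i) you conjugate by the permutation unitary $P_\pi$, which carries $H_e$ onto $H_{\pi(e)}$ and $\Local$ onto itself; for (ii) you invert and then use \Cref{prop:normalform} to move the local layer back to the left. Your factorizations $\Psi=\Phi_{w^{\mathrm{rev}}}\circ\sigma$ and $\Phi_{w^{\mathrm{rev}}}=\Psi\circ\sigma'$ supply the rank comparison that the paper skips by speaking only of images. One correction: $\sigma$ is not actually surjective. For reduced words of length at least two the pushed local layer is absorbed completely, so $\sigma$ lands in $\{I\}\times\SU(4)^m$. This is harmless, because $\rank D(\Phi\circ\sigma)_x\le\rank D\Phi_{\sigma(x)}$ needs only smoothness of $\sigma$.
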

\begin{proof}
For a qubit permutation $\pi$, let $P_\pi$ be the corresponding permutation unitary on $\Hilb$.  Conjugation by $P_\pi$ maps $H_e$ diffeomorphically onto $H_{\pi(e)}$ and maps $\Local$ onto itself.  Therefore it maps $\Circuit_w$ diffeomorphically onto $\Circuit_{\pi(w)}$, preserving differential ranks and open sets.

For reversal, take inverses.  If
\[
 U=L U_m^{e_m}\cdots U_1^{e_1},
\]
then
\[
 U^{-1}=(U_1^{-1})^{e_1}\cdots(U_m^{-1})^{e_m}L^{-1}.
\]
Using \Cref{prop:normalform} to collect the final local gate shows that inversion maps $\Circuit_w$ diffeomorphically onto $\Circuit_{w^{\mathrm{rev}}}$.  Inversion is a smooth involution on $\SU(8)$, so it preserves accessible dimension and interior.
\end{proof}

\section{The upper bound}

\begin{lemma}\label{thm:upperbound}
For every support word $w$,
\[
d(w)
\le \min\{63,9r(w)+9\}.
\]
\end{lemma}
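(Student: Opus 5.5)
The plan is to bound the rank of the differential by reparametrizing the circuit map with a smooth map whose domain has dimension $9r+9$. The bound $63$ is immediate, since $\Phi_w$ maps into $\SU(8)$, which has dimension $63$ by \Cref{lem:sudim}. By \Cref{prop:runreduction} we may replace $w$ by $\bar w$, so it suffices to treat a reduced word $w=e_1\cdots e_r$ and show that $\rank D\Phi_w\le 9r+9$ everywhere.

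The key step is a factorization through a quotient of dimension $9r+9$. For each $e\in E$, the group $H_e\cong\SU(4)$ contains the subgroup $\Local_e=\{K_e\}$ of one-qubit gates on the two qubits of $e$. This subgroup is isomorphic to $\SU(2)\times\SU(2)$ and has dimension $6$. The coset space $H_e/\Local_e$ therefore has dimension $15-6=9$. I will show that $\Circuit_w$ is the image of a smooth map
\[
\Psi:\Local\times \prod_{j=1}^r \SU(4)/(\SU(2)\times\SU(2))\to\SU(8).
\]
This needs care, because $U_j^{e_j}$ itself does not descend to the quotient. The fix is to absorb local factors, in the spirit of the proof of \Cref{prop:normalform}. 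Write each $U_j^{e_j}=K_j V_j$, where $V_j$ is a chosen coset representative and $K_j\in\Local_{e_j}\subseteq\Local$. Then push $K_j$ leftward through $V_{j+1},\dots,V_r$. This does not work directly, because $K_j$ does not commute with later gates.

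A cleaner route is to count the fibers of $\Phi_w$ rather than build the quotient explicitly. I will exhibit a free action on the domain $\Local\times\SU(4)^r$ (dimension $9+15r$) of the group $G=\prod_{j=1}^{r}(\SU(2)\times\SU(2))$, which has dimension $6r$, and show that $\Phi_w$ is constant on its orbits. Given $(g_1,\dots,g_r)$ with $g_j=K_j\in \Local_{e_j}$, I replace $U_j^{e_j}$ by $U_j^{e_j}K_j^{-1}$ and absorb $K_j$ into the preceding factor: replace $U_{j-1}^{e_{j-1}}$ by $K_jU_{j-1}^{e_{j-1}}$. The difficulty is that $K_j$ lives on $e_j$, not on $e_{j-1}$.

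So I split $K_j$ as $K_j=K_j'K_j''$, where $K_j'$ is the part on $e_j\cap e_{j-1}$ and $K_j''$ is the part on the remaining qubit of $e_j$. Consecutive letters are distinct, so $e_j\cap e_{j-1}$ is a single qubit. The factor $K_j'$ absorbs into $U_{j-1}^{e_{j-1}}$. The factor $K_j''$ acts on the spectator of $e_{j-1}$, so it commutes with $U_{j-1}^{e_{j-1}}$ and continues moving right. I propagate this inductively down to $j=1$. Any leftover factors acting to the right of $U_1$ must be avoided, so instead I run the absorption toward the left end, where $\Local$ absorbs everything. Concretely, write $U_j^{e_j}\mapsto K_jU_j^{e_j}$ and compensate on the left of $U_{j+1}^{e_{j+1}}$: the shared-qubit part is absorbed into $U_{j+1}$, the spectator part commutes past it, and the process ends in $L$. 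This gives a smooth action of $G$ under which $\Phi_w$ is invariant, and the action is free on the $U_j$ coordinates.

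With the action in hand, the rank bound follows from the orbits. Its orbits are $6r$-dimensional submanifolds, and $D\Phi_w$ vanishes on their tangent spaces. Hence $\rank D\Phi_w\le 9+15r-6r=9r+9$ at every point. The main obstacle is verifying that this infinitesimal action is well defined and injective, that is, that the $6r$ tangent directions $\I(\text{local on }e_j)$ inserted at slot $j$ are independent in the domain's tangent space. This is clear slot by slot: the action changes $U_j$ by right multiplication by $K_j^{-1}$, which is free, so the orbit map is an immersion. Combining this with the trivial bound $63$ gives the stated inequality.
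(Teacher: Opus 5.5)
Your final argument is correct, and it is essentially the standard parameter count the paper invokes without writing out: the six one-qubit directions of each $\SU(4)$ block can be shuttled into later blocks and ultimately into $L$, which gives at least $6r$ independent kernel directions of $D\Phi_w$ and hence rank at most $9+15r-6r=9r+9$. One repair is needed in the injectivity check: in your final action, slot $j$ is moved by left multiplication by $K_j$ (not right multiplication by $K_j^{-1}$) and also by the compensations coming from slots $j-1$ and $j-2$, so independence should be argued inductively starting from slot $1$, which only $K_1$ moves.
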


\Cref{thm:upperbound} implies that no architecture with five or fewer reduced two-qubit slots can be locally universal, because $54<63$.

\section{The differential in Pauli coordinates}

\subsection{Product-map derivative}

For a reduced support word $w=e_1\cdots e_m$, first ignore the free final local layer and consider
\[
 F_w:\SU(4)^m\to\SU(8),\qquad
 F_w(U_1,\ldots,U_m)=U_m^{e_m}\cdots U_1^{e_1}.
\]

\begin{lemma}[Right-trivialized differential]\label{lem:differential}
At a point $(U_1,\ldots,U_m)$, the right-trivialized differential image of $F_w$ is
\[
 \operatorname{im}DF_w
 =\sum_{j=1}^m
 \Ad_{S_j}(\Lie_{e_j}),
 \qquad
 S_j=U_m^{e_m}\cdots U_{j+1}^{e_{j+1}},
\]
where $S_m=I$ and $\Ad_S(H)=SHS^\dagger$.
\end{lemma}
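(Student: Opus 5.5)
The plan is a direct computation via the product rule, one factor at a time. Fix the point $(U_1,\ldots,U_m)$ and perturb only the $j$-th argument along a smooth curve $U_j(t)$ with $U_j(0)=U_j$. Because the embedding $V\mapsto V^{e_j}$ is a group homomorphism $\SU(4)\to H_{e_j}$, the curve $t\mapsto U_j(t)^{e_j}$ lies in $H_{e_j}$. Its right-trivialized velocity $\I\,\frac{d}{dt}U_j(t)^{e_j}\big|_{0}\,(U_j^{e_j})^\dagger$ is therefore an element $h$ of $\Lie_{e_j}$. Every $h\in\Lie_{e_j}$ arises this way: take $U_j(t)=\e^{-\I tH}U_j$ with $H\in\mathfrak{su}(4)$ (Hermitian traceless) chosen so that $H^{e_j}=h$. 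This uses that the differential of the embedding at the identity is an isomorphism onto $\Lie_{e_j}$, which has dimension $15$.

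Next write $F_w=S_j\,U_j^{e_j}\,R_j$ with $R_j=U_{j-1}^{e_{j-1}}\cdots U_1^{e_1}$, where $S_j$ and $R_j$ do not depend on $t$. Then
\[
\I\,\dot F_w F_w^\dagger
=S_j\bigl(\I\,\dot{(U_j^{e_j})}\,R_j R_j^\dagger (U_j^{e_j})^\dagger\bigr)S_j^\dagger
=S_j\,h\,S_j^\dagger=\Ad_{S_j}(h),
\]
using $R_jR_j^\dagger=I$. So varying the $j$-th slot alone produces exactly $\Ad_{S_j}(\Lie_{e_j})$.

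Finally, the differential of a map on a product manifold is the sum of the partial differentials, and $\tau_{F_w}$ is linear. Hence the image of $DF_w$ is the sum of the images of the partial differentials, which gives $\sum_j\Ad_{S_j}(\Lie_{e_j})$. The convention $S_m=I$ is the empty product and needs no separate treatment.

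There is no real obstacle here. The only point needing care is the ordering convention: $j=1$ is the earliest gate, so $S_j$ collects the later gates, which sit to the left. This is why conjugation is by the prefix $S_j$ and not by the suffix $R_j$. With left trivialization one would instead obtain conjugation by $R_j^\dagger$.
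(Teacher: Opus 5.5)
Your proof is correct and follows essentially the same route as the paper. Like the paper, you perturb slot $j$ by left multiplication with $\exp(-\I tH)$, check that the right-trivialized velocity is $\Ad_{S_j}(H)$, and sum over slots by linearity of the differential. Your additional observation, that any curve in slot $j$ has right-trivialized velocity in $\Lie_{e_j}$, makes the inclusion $\operatorname{im}DF_w\subseteq\sum_j\Ad_{S_j}(\Lie_{e_j})$ explicit, whereas the paper leaves it implicit.
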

\begin{proof}
Choose a traceless Hermitian generator $H\in\Lie_{e_j}$ and vary only the $j$-th gate by left multiplication:
\[
 U_j(t)^{e_j}=\exp(-\I tH)U_j^{e_j}.
\]
Write $F=F_w(U_1,\ldots,U_m)$ and $S_j=U_m^{e_m}\cdots U_{j+1}^{e_{j+1}}$.  Differentiating at $t=0$ gives
\[
 \dot F=-\I S_jH S_j^\dagger F.
\]
Right-trivialization yields
\[
 \tau_F(\dot F)=\I\dot F F^\dagger=S_jHS_j^\dagger=\Ad_{S_j}(H).
\]
Allowing $H$ to range over the $15$-dimensional space $\Lie_{e_j}$ gives the contribution of slot $j$.  The differential is linear in simultaneous variations, so its image is the sum of the slot subspaces.
\end{proof}

By \Cref{lem:paulibasis}, each conjugated generator has $63$ real Pauli coordinates.  Taking the $15$ generators on each support as columns produces a $63\times15m$ matrix $J_w$, whose rank equals $\rank DF_w$.

\subsection{Rational Pauli rotations}

For a nonidentity two-qubit Pauli string $P$, define
\[
 R_P(\theta)=\exp(-\I\theta P/2).
\]
Since $P^2=I$, the exponential is
\[
 R_P(\theta)=\cos(\theta/2)I-\I\sin(\theta/2)P.
\]

\begin{lemma}[Conjugation by a Pauli rotation]\label{lem:paulirotation}
Let $P$ and $Q$ be Pauli strings.  If they commute, then
\[
 R_P(\theta)QR_P(\theta)^\dagger=Q.
\]
If they anticommute, then
\[
 R_P(\theta)QR_P(\theta)^\dagger
 =\cos\theta\,Q-\I\sin\theta\,PQ.
\]
If $t=\tan(\theta/4)\in\Q$, then all coefficients in this formula are rational.
\end{lemma}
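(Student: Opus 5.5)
The argument is a short direct computation in three parts: the commuting case, the anticommuting case, and the rationality claim.

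\emph{Commuting case.} Write $R=R_P(\theta)=c\,I-\I s\,P$ with $c=\cos(\theta/2)$ and $s=\sin(\theta/2)$. Since $P$ is Hermitian, $R^\dagger=c\,I+\I s\,P$. If $PQ=QP$, then $Q$ commutes with every polynomial in $P$, so $RQR^\dagger=QRR^\dagger=Q$, because $R$ is unitary.

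\emph{Anticommuting case.} Suppose $PQ=-QP$. Then $QR^\dagger=cQ+\I sQP=(cI-\I sP)Q=RQ$, so $RQR^\dagger=R^2Q$. Using $P^2=I$,
\[
R^2=(c^2-s^2)I-2\I cs\,P=\cos\theta\,I-\I\sin\theta\,P .
\]
Hence $RQR^\dagger=\cos\theta\,Q-\I\sin\theta\,PQ$. Note that $-\I PQ$ is itself a Hermitian Pauli string up to a sign $\pm1$: anticommutation makes $PQ$ equal to $\pm\I$ times a Pauli string. So the output is a real combination of two Pauli strings with coefficients $\pm\cos\theta$ and $\pm\sin\theta$.

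\emph{Rationality.} Put $t=\tan(\theta/4)$. The half-angle substitution gives
\[
\cos(\theta/2)=\frac{1-t^2}{1+t^2},\qquad \sin(\theta/2)=\frac{2t}{1+t^2},
\]
and these are rational whenever $t\in\Q$. Applying the double-angle formulas, $\cos\theta=c^2-s^2$ and $\sin\theta=2cs$ are also rational. Moreover, the entries of $R$ are rational up to the unit $\I$, so the Pauli coefficients in the formula are rational.

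This choice of $\tan(\theta/4)$ rather than $\tan(\theta/2)$ is deliberate, so that the rotation matrices $R$ themselves are rational and not only their conjugation action. None of these steps is a real obstacle; the only care needed is the sign bookkeeping in $-\I PQ$.
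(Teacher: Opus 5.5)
Your proof is correct and is essentially the paper's argument: the same commuting-case observation, the same half-angle substitution for $c$ and $s$, and in the anticommuting case your shortcut $QR^\dagger=RQ$, hence $RQR^\dagger=R^2Q=R_P(2\theta)Q$, is a tidier way of organizing the paper's direct expansion of $(cI-\I sP)Q(cI+\I sP)$. Your remark that $-\I PQ$ is $\pm$ a Pauli string usefully makes explicit why the coefficients are rational in the Pauli basis, which the paper leaves implicit. The sentence about the entries of $R$ being \emph{rational up to the unit $\I$} is imprecise, since entries such as $c-\I s$ are Gaussian rationals, but the lemma does not need it.
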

\begin{proof}
Write $c=\cos(\theta/2)$ and $s=\sin(\theta/2)$.  If $P$ and $Q$ commute, then $Q$ commutes with every polynomial in $P$, hence with $R_P(\theta)$.

If $PQ=-QP$, expand directly:
\begin{align*}
(cI-\I sP)Q(cI+\I sP)
&=c^2Q+\I csQP-\I csPQ+s^2PQP\\
&=(c^2-s^2)Q-2\I csPQ\\
&=\cos\theta\,Q-\I\sin\theta\,PQ.
\end{align*}
Finally, if $t=\tan(\theta/4)$, then
\[
 c=\frac{1-t^2}{1+t^2},\qquad
 s=\frac{2t}{1+t^2},
\]
so $c,s,\cos\theta=c^2-s^2$, and $\sin\theta=2cs$ are rational.
\end{proof}

Every gate used in the architecture certificates is a short product of such rotations with rational $t$.  Therefore every Pauli coordinate of every conjugated tangent generator is rational.

\subsection{Finite-field proof certificates}

\begin{lemma}\label{lem:finitefield}
Let $M$ be a matrix with rational entries.  Let $p$ be a prime that divides none of the denominators appearing in a selected square minor $N$.  Reduce every entry of $N$ modulo $p$ by replacing $a/b$ with $a b^{-1}\pmod p$.  If
\[
 \det(N)\not\equiv0\pmod p,
\]
then $\det(N)\ne0$ in $\Q$, and therefore $\rank M$ over $\R$ and $\C$ is at least the size of $N$.
\end{lemma}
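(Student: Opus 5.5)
The argument is a standard reduction-mod-$p$ argument: reduction modulo $p$ is a ring homomorphism on the rationals whose denominators avoid $p$, and the determinant is a polynomial in the entries. Let $N$ be the selected $k\times k$ minor, with entries $a_{ij}/b_{ij}$ where $p\nmid b_{ij}$. Let $\Z_{(p)}\subset\Q$ denote the localization of the integers at $p$, i.e.\ the rationals whose reduced denominator is prime to $p$; this set is closed under sums and products. The reduction map
\[
 \rho:\Z_{(p)}\to\mathbb{F}_p,\qquad a/b\mapsto ab^{-1}\pmod p,
\]
is well defined and is a ring homomorphism. Well-definedness holds because $a/b=a'/b'$ with $p\nmid bb'$ gives $ab'\equiv a'b$, hence $ab^{-1}\equiv a'b'^{-1}$. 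Additivity and multiplicativity follow by clearing the denominators, which are invertible mod $p$. The map $\rho$ is exactly the entrywise replacement described in the statement.

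Next we use that $\det N$ is the Leibniz polynomial $\sum_{\sigma}\operatorname{sgn}(\sigma)\prod_i N_{i\sigma(i)}$, which has integer coefficients in the entries. Every entry lies in $\Z_{(p)}$, so $\det N\in\Z_{(p)}$. Since $\rho$ is a ring homomorphism, it commutes with this polynomial, giving
\[
 \rho(\det N)=\det(\rho(N)).
\]
If $\det N=0$ in $\Q$, then $\det(\rho(N))=\rho(0)=0$. The contrapositive shows that the hypothesis $\det(\rho(N))\not\equiv0$ forces $\det N\neq0$ in $\Q$.

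Finally, $\Q\subset\R\subset\C$, so $N$ is a nonsingular $k\times k$ submatrix of $M$ over $\R$ and over $\C$. Its $k$ columns are therefore linearly independent, and so are the corresponding $k$ columns of $M$, since restricting a column vector to the selected rows preserves independence in the backward direction. Hence $\rank M\ge k$.

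Nothing in this argument is a real obstacle. The only point needing care is that the denominators must be controlled after clearing. The hypothesis is stated for the entries of $N$, and that suffices, because $\Z_{(p)}$ is a ring and so no new denominators divisible by $p$ arise in the determinant expansion.
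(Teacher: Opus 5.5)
Your proof is correct and is essentially the paper's argument. The paper clears denominators with a common denominator $D$ prime to $p$, so that $\det(DN)=D^r\det N$ is an integer whose reduction is $D^r\det(N \bmod p)\not\equiv 0$; this is the same compatibility of reduction modulo $p$ with the determinant that you package as the ring homomorphism $\mathbb{Z}_{(p)}\to\mathbb{F}_p$, and you also spell out the routine step from a nonsingular $k\times k$ submatrix to $\rank M\ge k$, which the paper leaves implicit.
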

\begin{proof}
Let $D$ be a common positive denominator of the entries of $N$, chosen so that $p\nmid D$.  Then $A=DN$ is an integer matrix and
$
 \det A=D^r\det N
$
for an $r\times r$ minor.  Reduction modulo $p$ gives
$
 \det(A\bmod p)\equiv D^r\det(N\bmod p)\pmod p.
$
Both $D^r$ and $\det(N\bmod p)$ are nonzero modulo $p$, so $\det A$ is not divisible by $p$ and in particular is a nonzero integer.  Hence $\det N=\det A/D^r$ is a nonzero rational number.
\end{proof}

\section{Exhaustive architecture classes and certificates}

\subsection{Canonical support words}

For compact notation set
\[
 0=AB,\qquad 1=AC,\qquad 2=BC.
\]
A word is reduced when adjacent digits are different.

\begin{observation}\label{prop:classes}
Up to qubit relabeling and reversal, the reduced support words of lengths one through six have the following canonical representatives:
\begin{center}
\begin{tabular}{c l}
\toprule
\textnormal{length} & \textnormal{representatives}\\
\midrule
$1$ & \textnormal{\word{0}}\\
$2$ & \textnormal{\word{01}}\\
$3$ & \textnormal{\word{010}, \word{012}}\\
$4$ & \textnormal{\word{0101}, \word{0102}, \word{0120}}\\
$5$ & \textnormal{\word{01010}, \word{01012}, \word{01020}, \word{01021}, \word{01201}, \word{01210}}\\
$6$ & \textnormal{\word{010101}, \word{010102}, \word{010120}, \word{010121}, \word{010201},}\\
  & \textnormal{\word{010210}, \word{010212}, \word{012012}, \word{012021}, \word{012120}.}\\
\bottomrule
\end{tabular}
\end{center}
There are $22$ classes of lengths two through six.
\end{observation}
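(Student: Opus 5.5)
The plan is to reduce the classification to a small, explicit finite enumeration. The key structural fact is how the symmetry group of \Cref{prop:symmetry} acts on supports. A permutation $\pi$ of the qubits $A,B,C$ induces a permutation of $E=\{AB,AC,BC\}$. Explicitly, the support $e$ corresponds to the qubit it omits, and the map from omitted qubit to support is $S_3$-equivariant, so the induced action on $\{0,1,2\}$ is the \emph{full} symmetric group $S_3$. Relabeling therefore acts on reduced words by applying an arbitrary permutation of the alphabet $\{0,1,2\}$ letterwise, and it preserves reducedness. Reversal also preserves reducedness and commutes with relabeling. Hence the group acting on reduced words of length $n$ is $S_3\times\mathbb Z_2$.

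First I normalize with respect to $S_3$. This group acts simply transitively on ordered pairs of distinct letters, so for $n\ge2$ each $S_3$-orbit contains exactly one reduced word beginning with \word{01}. After the prefix, each subsequent letter has two choices, so there are exactly $2^{n-2}$ such normalized words. The $S_3$-orbits of reduced words of length $n$ are thus in bijection with these $2^{n-2}$ words, and reversal induces an involution $\rho$ on them: reverse the word, then relabel it so that it again begins with \word{01}. The canonical representative of a class is the lexicographically smaller of $\{x,\rho(x)\}$. Length one is trivial, since all three single letters are relabelings of \word{0}.

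Next I run the enumeration length by length. For each normalized word $x$ I compute $\rho(x)$ and check that the listed representatives are exactly the lexicographic minima. Some cases are quick. For $n=3$, \word{010} is a palindrome and \word{012} reverses to \word{210}, which renormalizes to \word{012}, so both are fixed and give $2$ classes. For $n=4$, \word{0121} reverses to \word{1210}, which becomes \word{0102} under $1\mapsto0$, $2\mapsto1$, $0\mapsto2$; the other two words are fixed, giving $3$ classes. For $n=5$ and $n=6$ the same mechanical check on $8$ and $16$ words is carried out. By Burnside's lemma for the $\mathbb Z_2$-action, the number of classes is $\tfrac12(2^{n-2}+\#\mathrm{Fix}(\rho))$. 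The listed counts $6$ and $10$ therefore correspond to exactly $4$ fixed points in each case, and I will verify that number directly as a consistency check on the table. Summing the class counts for lengths two through six gives $1+2+3+6+10=22$.

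The only real obstacle is bookkeeping in the $n=6$ case. There are sixteen reversals to renormalize, and a single relabeling slip would either merge two distinct classes or split one. I would guard against this in two ways. First, compute $\rho$ by the explicit rule: if the reversed word begins with letters $a,b$, apply the unique permutation sending $a\mapsto0$ and $b\mapsto1$. Second, cross-check the resulting orbit counts against the Burnside formula.
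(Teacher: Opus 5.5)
Your proposal is correct. The paper states this observation without any argument; the table is implicitly the output of a finite enumeration. Your reduction is exactly what is needed to justify it:
\begin{itemize}
\item relabeling induces the full $S_3$ on $\{0,1,2\}$;
\item normalizing to the prefix \texttt{01} picks exactly one word from each $S_3$-orbit for $n\ge2$;
\item reversal descends to the involution $\rho$ because it commutes with relabeling.
\end{itemize}

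Carrying out your enumeration confirms the table. For $n=5$ the nontrivial pairs are \texttt{01012}~$\leftrightarrow$~\texttt{01212} and \texttt{01021}~$\leftrightarrow$~\texttt{01202}. The fixed points are \texttt{01010}, \texttt{01020}, \texttt{01201}, \texttt{01210}. For $n=6$ the nontrivial pairs are:
\begin{itemize}
\item \texttt{010102}~$\leftrightarrow$~\texttt{012121} and \texttt{010120}~$\leftrightarrow$~\texttt{012020};
\item \texttt{010121}~$\leftrightarrow$~\texttt{010202} and \texttt{010201}~$\leftrightarrow$~\texttt{012101};
\item \texttt{010210}~$\leftrightarrow$~\texttt{012010} and \texttt{012021}~$\leftrightarrow$~\texttt{012102}.
\end{itemize}
The fixed points are \texttt{010101}, \texttt{010212}, \texttt{012012}, \texttt{012120}. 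So both lengths have $4$ fixed points, as you predicted, and the lexicographic minima are exactly the listed words.

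One remark on your safeguard. Your Burnside count reuses the same values of $\rho$, so it is not an independent check. The more effective guard against relabeling slips is to verify $\rho(\rho(x))=x$ for every word.

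If you want a genuinely independent cross-check, encode a reduced word by its steps $w_{i+1}-w_i\in\{\pm1\}\subset\mathbb Z_3$. Translations fix the step string and reflections complement it. Reversal reverses the string and complements it. Reduced words up to relabeling and reversal therefore correspond to binary strings of length $n-1$ up to reversal and complement. Burnside's lemma for this $\mathbb Z_2\times\mathbb Z_2$ action gives $(2^{n-1}+2^{n/2})/4$ classes for even $n$ and $(2^{n-1}+2^{(n+1)/2})/4$ for odd $n$. That is $1,2,3,6,10$ for $n=2,\dots,6$, obtained without any renormalization.
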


\subsection{Certificate format}

For each canonical word of length $m\ge2$, we specify:
\begin{enumerate}[leftmargin=2.2em]
\item a list of $m$ two-qubit gates, each given as a product of Pauli rotations;
\item rational tangent-half-angle data $t=\tan(\theta/4)$ for each rotation;
\item a set of row and column indices selecting a square Jacobian minor;
\item the determinant of that minor modulo $p=1{,}000{,}003$.
\end{enumerate}
The Pauli strings are ordered lexicographically with $I<X<Y<Z$.  The global rows are the $63$ strings in $\{I,X,Y,Z\}^3\setminus\{III\}$, and the $15$ columns for each slot are the strings in $\{I,X,Y,Z\}^2\setminus\{II\}$.  Gate slots are ordered from earliest to latest.  If a gate recipe lists $(P_1,t_1),\ldots,(P_s,t_s)$, the gate is
\[
 U=R_{P_s}(\theta_s)\cdots R_{P_1}(\theta_1),
 \qquad t_a=\tan(\theta_a/4).
\]

\begin{table}[H]
\centering
\caption{All architecture certificates.  Every residue is the determinant of the selected maximal minor modulo the prime $1{,}000{,}003$.}
\label{tab:certsummary}
\begin{tabular}{@{}cccc@{}}
\toprule
canonical word & length & certified rank & residue\\
\midrule
01 & 2 & 27 & 692520 \\
010 & 3 & 36 & 767294 \\
012 & 3 & 36 & 201033 \\
0101 & 4 & 45 & 241080 \\
0102 & 4 & 45 & 153904 \\
0120 & 4 & 45 & 986707 \\
01010 & 5 & 54 & 758597 \\
01012 & 5 & 54 & 833372 \\
01020 & 5 & 54 & 30187 \\
01021 & 5 & 54 & 52097 \\
01201 & 5 & 54 & 4519 \\
01210 & 5 & 54 & 30415 \\
010101 & 6 & 63 & 331488 \\
010102 & 6 & 63 & 2796 \\
010120 & 6 & 63 & 190332 \\
010121 & 6 & 63 & 349679 \\
010201 & 6 & 63 & 518497 \\
010210 & 6 & 63 & 728400 \\
010212 & 6 & 63 & 899673 \\
012012 & 6 & 63 & 794691 \\
012021 & 6 & 63 & 91876 \\
012120 & 6 & 63 & 822563 \\
\bottomrule
\end{tabular}
\end{table}

\begin{proposition}\label{prop:certifiedranks}
For every reduced support word $w$ of length $m\in\{2,3,4,5,6\}$,
\[
 \max\rank DF_w=\min\{63,9m+9\}.
\]
\end{proposition}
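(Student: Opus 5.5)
The plan is to prove the equality as two inequalities: an upper bound inherited from \Cref{thm:upperbound}, and a matching lower bound obtained from the exact certificates in \Cref{tab:certsummary}, transported to all reduced words by symmetry.

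\emph{Upper bound.} The map $F_w$ is the restriction of $\Phi_w$ to the slice $L=I$. Hence at every point $\rank DF_w\le\rank D\Phi_w\le d(w)$. For a reduced word we have $r(w)=m$, so \Cref{thm:upperbound} gives $\max\rank DF_w\le\min\{63,9m+9\}$.

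\emph{Reduction to canonical words.} By \Cref{prop:classes}, every reduced word of length $2\le m\le 6$ is equivalent to one of the $22$ canonical representatives under a qubit permutation and/or reversal. I need these operations to preserve $\max\rank DF_w$ itself, and not only $d(w)$, so I will redo the argument of \Cref{prop:symmetry} without the local layer.
\begin{itemize}
\item Conjugation by $P_\pi$ satisfies $P_\pi F_w(U)P_\pi^\dagger=F_{\pi(w)}(U')$, where $U'$ is obtained from $U$ by a diffeomorphism of $\SU(4)^m$. The only subtlety is that $\pi$ may reorder the two qubits within a support; this is absorbed by conjugating the $\SU(4)$ factor by the two-qubit swap.
\item Inversion satisfies $F_w(U_1,\dots,U_m)^{-1}=F_{w^{\mathrm{rev}}}(U_m^{-1},\dots,U_1^{-1})$ directly, with no local layer to collect.
\end{itemize}
Both maps are diffeomorphisms of $\SU(8)$ composed with diffeomorphisms of the domain, so differential ranks match pointwise. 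It therefore suffices to treat the canonical representatives.

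\emph{Lower bound for each representative.} Fix a canonical word $w$ of length $m$. Evaluate at the certificate point, where each $U_j$ is the listed product of rotations $R_{P}(\theta)$ with $\tan(\theta/4)\in\Q$. By \Cref{lem:differential}, the column of $J_w$ indexed by a generator $Q\in\Lie_{e_j}$ is the Pauli-coordinate vector of $\Ad_{S_j}(Q)$. Here $S_j$ is a product of rational Pauli rotations embedded on the appropriate supports; an embedded two-qubit Pauli rotation is a three-qubit Pauli rotation with the same angle. Applying \Cref{lem:paulirotation} repeatedly, each conjugation sends a Pauli string to a rational combination of Pauli strings: note that $-\I PQ$ is $\pm$ a Pauli string when $P$ and $Q$ anticommute. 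By \Cref{lem:paulibasis}, all entries of $J_w$ are therefore rational. Select the listed $k\times k$ minor, with $k=\min\{63,9m+9\}$, i.e.\ $27,36,45,54$ or $63$. Check that $p=1{,}000{,}003$ divides no denominator; these are built from factors $1+t^2$ for the chosen small $t$. The tabulated nonzero residue then certifies via \Cref{lem:finitefield} that $\rank J_w\ge k$, and hence $\max\rank DF_w\ge k$. Combining this with the upper bound gives the equality.

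\emph{Main obstacle.} The substantive step is the finite computation itself. Its correctness depends on fixing conventions unambiguously: the embedding of $AC$ via swaps, the reverse-chronological order of the product $S_j$, the lexicographic row and column orderings, and exact modular arithmetic for the determinant. I would carry this out by computing the Pauli-coordinate columns exactly in $\mathbb F_p$, propagating each generator through the rotations using the commute/anticommute rule rather than multiplying $8\times 8$ matrices, and then performing Gaussian elimination mod $p$ on the selected minor. A worry to address separately is whether the denominators avoid $p$; this reduces to checking that $1+t^2\not\equiv 0\pmod p$ for each chosen $t$. The combinatorial claim that the $22$ classes of \Cref{prop:classes} are exhaustive is also part of the argument and should be verified by enumerating all $3\cdot 2^{m-1}$ reduced words.
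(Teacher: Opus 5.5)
Your proposal is correct and follows the paper's proof: the upper bound comes from \Cref{thm:upperbound}, relabeling and reversal reduce every reduced word to one of the $22$ canonical representatives, and the matching lower bound is certified by the nonzero minors modulo $p$ via \Cref{lem:finitefield}. Two steps the paper leaves implicit are filled in by your version without changing the route: you pass the bound from $\Phi_w$ to $F_w$ through the slice $L=I$, and you redo the argument of \Cref{prop:symmetry} for $F_w$ itself rather than for $d(w)$.
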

\begin{proof}
By \Cref{prop:symmetry} and \Cref{prop:classes}, it is enough to prove one representative of each row of \Cref{tab:certsummary}.  The complete rational recipes, pivot rows, pivot columns, and residues are reproduced in the machine-readable certificate in \Cref{app:manifest}.

The verifier performs the following steps for each representative.
It constructs every tangent column using the formula in \Cref{lem:differential} and the rational conjugation rule in \Cref{lem:paulirotation}.  It checks that every denominator is invertible modulo $1{,}000{,}003$. It extracts the selected square minor and recomputes its determinant by modular Gaussian elimination.  The resulting determinant is the residue displayed in \Cref{tab:certsummary}.  By \Cref{lem:finitefield}, the corresponding rational minor is nonzero, so the real Jacobian rank is at least the displayed value.
The upper bound in \Cref{thm:upperbound} is equal to the displayed value for each length.  Therefore the lower and upper bounds agree.
\end{proof}

Combining the preceding results yields the following complete dimension law.
\begin{theorem}\label{thm:main}
For every three-qubit support word $w$,
\[
d(w)=\min\{63,9r(w)+9\}.
\]
\end{theorem}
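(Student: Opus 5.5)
The plan is to reduce to reduced words and then split by reduced length. By \Cref{prop:runreduction}, $d(w)=d(\bar w)$ and $r(w)=r(\bar w)$, so I may assume $w$ is reduced of length $m=r(w)$. The upper bound $d(w)\le\min\{63,9m+9\}$ is \Cref{thm:upperbound}. All the work is in the matching lower bound, which I split into three cases: $m\le1$, $2\le m\le6$, and $m\ge7$. The empty word also needs a separate check, since there $9r+9=9$ is exactly $\dim\Local$ and $\Circuit_w=\Local$.

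For the middle range I first relate $\Phi_w$ to $F_w$. Since $\Phi_w(L,U_1,\dots,U_m)=L\,F_w(U_1,\dots,U_m)$, the right-trivialized image of $D\Phi_w$ at a point contains $\Ad_L(\operatorname{im}DF_w)$, which has the same dimension as $\operatorname{im}DF_w$. Hence $\rank D\Phi_w\ge\rank DF_w$, and \Cref{prop:certifiedranks} gives $d(w)\ge\min\{63,9m+9\}$ for $2\le m\le6$. For $m=1$ the image is $\Local H_e$, whose tangent space at the identity is $\Lie_e+\mathfrak{su}(2)_s$. Its dimension is $15+3=18$, because the spectator directions lie outside $\Lie_e$. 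For $m=0$ the image is $\Local$ itself, of dimension $9$.

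For $m\ge7$ the target is $d(w)=63$. Write $w=uv$, where $u$ is the prefix of length $6$; it is reduced and so locally universal. The last letter of $u$ may differ from the first letter of $v$, but this does not matter. I fix the gates of $v$ at the identity, which is allowed because each $U_j$ ranges over all of $\SU(4)$. The resulting map is then exactly $F_u$ (composed with $L$). Its differential at that point is a restriction of $D\Phi_w$, since we only vary the slots of $u$. So $\rank D\Phi_w\ge 63$ at a point where $DF_u$ has full rank.

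The only subtlety is that the maximum of the rank of $DF_u$ is attained at some point, and setting the later gates to identity does not move that point. This is immediate, because the later slots are simply appended as identities. With that, the lower bound matches the upper bound in every case.

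The main obstacle is already contained in \Cref{prop:certifiedranks}, namely the finite-field certificates. What remains here is the bookkeeping: checking that local and padding slots can only increase the rank, and handling the trivial lengths $0$ and $1$ by hand.
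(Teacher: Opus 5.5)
Your proof is correct and follows the paper's route. The paper obtains \Cref{thm:main} by combining run reduction (\Cref{prop:runreduction}), the parameter-count bound (\Cref{thm:upperbound}), and the finite-field certified ranks of \Cref{prop:certifiedranks}; your explicit treatment of the passage from $F_w$ to $\Phi_w$, of the lengths $r(w)\le 1$, and of $r(w)\ge 7$ by padding with identity gates supplies steps that the paper's one-line ``combining the preceding results'' leaves implicit.
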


\subsection{A particularly simple fixed-path witness}

Our exhaustive certificate proves that every alternating two-edge word works.  The following separate witness is conceptually useful because the same two-qubit matrix is repeated in all six slots.

\begin{proposition}[Repeated-gate full-rank witness on a fixed line]\label{prop:fixedwitness}
For the support word
\[
 w=AB,BC,AB,BC,AB,BC,
\]
there is one two-qubit gate $\widetilde G\in\SU(4)$ such that setting all six slots equal to $\widetilde G$ gives $\rank DF_w=63$.
\end{proposition}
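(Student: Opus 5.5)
The plan is to produce an explicit rational witness $\widetilde G$ and certify it with exactly the same machinery used in \Cref{prop:certifiedranks}. I take $\widetilde G$ to be a short product of Pauli rotations,
\[
\widetilde G=R_{P_s}(\theta_s)\cdots R_{P_1}(\theta_1),\qquad t_a=\tan(\theta_a/4)\in\Q .
\]
The product should include one rotation from each of the nonlocal families $XX$, $YY$, $ZZ$, so that the Cartan coordinates are generic. It should also include a few one-qubit rotations, for instance $XI$, $IY$, $ZI$, $IX$ with distinct small rationals such as $t=1/2,1/3,2/5,\ldots$. These break the symmetries between the two tensor factors and between repeated slots. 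Then I set $U_1=\cdots=U_6=\widetilde G$. By \Cref{lem:paulirotation}, every conjugation $\Ad_{S_j}$ acts on Pauli strings with rational coefficients, so the $63\times 90$ matrix $J_w$ from \Cref{lem:differential} has rational entries.

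The key steps are as follows.
\begin{enumerate}
\item Compute the suffix products $S_j=(\widetilde G^{BC}\widetilde G^{AB})^{k}$ or $\widetilde G^{AB}(\widetilde G^{BC}\widetilde G^{AB})^{k}$, as appropriate. Each is a word of rational Pauli rotations on three qubits. Note that $\widetilde G^{BC}$ is obtained by applying the same rotation recipe with the Pauli strings placed on qubits $B,C$.
\item Push each of the $15$ generators of $\Lie_{AB}$ or $\Lie_{BC}$ through $S_j$, applying \Cref{lem:paulirotation} rotation by rotation. This yields the $90$ columns in the Pauli basis of \Cref{lem:paulibasis}.
\item Select $63$ pivot columns and run modular Gaussian elimination over $\mathbb F_{1000003}$, first checking that no denominator $1+t^2$ (or any product of such) vanishes modulo $p$.
\item Apply \Cref{lem:finitefield} to conclude that the rational $63\times63$ minor is nonzero, so $\rank DF_w\ge 63$.
\item Conclude equality from \Cref{thm:upperbound}, or simply from $\dim\SU(8)=63$.
\end{enumerate}

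The main obstacle is the choice of $\widetilde G$. Repeating the same gate introduces symmetry: the circuit becomes $(\widetilde G^{BC}\widetilde G^{AB})^3$, which commutes with the operator $T=\widetilde G^{BC}\widetilde G^{AB}$. A too-symmetric $\widetilde G$, for example one invariant under swapping its two qubits or lying in a small Cartan subgroup, could make the slot subspaces $\Ad_{S_j}(\Lie_{e_j})$ overlap excessively and drop the rank. My first attempt will therefore be a deliberately asymmetric recipe with pairwise distinct rational parameters. If the modular rank falls short of $63$, I will randomize the rational $t_a$ (rank is lower semicontinuous and the rank-deficient locus is Zariski closed) and retry until a nonzero residue appears. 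A secondary worry is that $\mathrm{Ad}_{\widetilde G}$ might preserve too much structure, for instance by mapping local generators to local generators. Including all three $XX,YY,ZZ$ rotations with generic angles should rule this out.

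Once a witness is found, the proof reduces to a finite certificate: the recipe, the pivot indices, and the residue, in the same format as \Cref{app:manifest}. It is verified exactly as in \Cref{prop:certifiedranks}.
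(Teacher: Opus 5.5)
\textbf{Gap: no certificate, so existence is not established.} Your pipeline is the same kind of argument the paper uses: Pauli-basis columns from \Cref{lem:differential}, a selected $63\times63$ minor, a nonzero residue, then \Cref{lem:finitefield}. But you never supply the thing that constitutes the proof, namely an actual $\widetilde G$ with pivot columns and a nonzero residue.

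The statement is purely existential. Its content is that the diagonal locus $U_1=\cdots=U_6$ is not contained in the rank-deficient variety. That locus is a thin subset of $\SU(4)^6$, of exactly the symmetric kind you worry about. Nothing in \Cref{prop:certifiedranks} implies this, because those certificates put different gates in different slots.

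Your fallback, to randomize the $t_a$ and retry until a nonzero residue appears, does not close the gap. Zariski-closedness of the degenerate locus only says that \emph{if} one good $\widetilde G$ exists, then almost all are good. It cannot produce the first one, and that is precisely what is being asserted.

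Moreover, your recipe has only seven parameters ($XX,YY,ZZ$ plus four one-qubit rotations), so retrying explores only a $7$-dimensional subfamily of the $15$-dimensional $\SU(4)$. If that subfamily lay inside the degenerate locus, no amount of retrying would help. For the search to be guaranteed to terminate, even granting existence, you need a recipe whose image has nonempty interior in $\SU(4)$, for instance full local layers on both sides of the $XX,YY,ZZ$ core. Then rational parameters are dense in an open set, which must meet the open dense good set.

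\textbf{How the paper closes it, and a correction.} The paper exhibits the witness explicitly. It takes $G$ to be the Cayley transform of a specific $4\times4$ Hermitian $H$, with Gaussian-rational entries of denominator $25$, and rescales by a fourth root of $(\det G)^{-1}$. It lists the $63$ selected columns and gives nonzero determinant residues modulo several primes $p\equiv1\pmod4$, reducing $\I$ to a square root of $-1$. Your rational-rotation choice would be arithmetically slightly cleaner: it lands directly in $\SU(4)$, the Pauli-coordinate arithmetic stays in $\Q$, and a single prime such as $1{,}000{,}003$ suffices. That advantage only materializes once an actual certificate is produced.

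Separately, fix the suffix products. Since the word ends in $BC$, they are $(\widetilde G^{BC}\widetilde G^{AB})^k$ and $\widetilde G^{BC}(\widetilde G^{AB}\widetilde G^{BC})^k$, not $\widetilde G^{AB}(\widetilde G^{BC}\widetilde G^{AB})^k$. For example, $S_5=\widetilde G^{BC}$.
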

\begin{proof}
Let
\[
 H=\begin{pmatrix}
0&0&-1+\I&0\\
0&1&0&0\\
-1-\I&0&1&\I\\
0&0&-\I&1
\end{pmatrix}.
\]
The matrix is Hermitian.  Since every eigenvalue of $H$ is real, $I+\I H$ is invertible.  Define the Cayley transform
\[
 G=(I-\I H)(I+\I H)^{-1}
 =\frac1{25}\begin{pmatrix}
-3+4\I&0&16+12\I&14-2\I\\
0&-25\I&0&0\\
-12+16\I&0&-11-2\I&6-8\I\\
-2-14\I&0&-6+8\I&1-18\I
\end{pmatrix}.
\]
Because $I-\I H$ and $I+\I H$ are commuting polynomials in $H$,
\[
 G^\dagger=(I-\I H)^{-1}(I+\I H),
\]
and hence $G^\dagger G=I$.  Direct determinant expansion of $G$ gives
\[
 \det G=\frac{-24+7\I}{25},
\]
which has modulus one.  Choose any scalar $\lambda$ satisfying $\lambda^4=(\det G)^{-1}$ and set $\widetilde G=\lambda G$.  Then $\widetilde G\in\SU(4)$.  Scalar multiplication does not change conjugation, so $\Ad_{\widetilde G}=\Ad_G$ and the Jacobian rank is unchanged.

The Jacobian is computed in the $63$-element Pauli basis using \Cref{lem:differential}.  Select columns
\begin{align*}
&0,1,2,3,4,5,6,7,8,9,10,11,12,13,14,18,19,20,21,22,23,24,25,26,27,28,29,\\
&34,35,36,38,39,40,42,43,44,49,50,51,53,54,55,57,58,59,64,65,66,68,69,70,\\
&72,73,74,79,80,81,83,84,85,87,88,89.
\end{align*}
The resulting $63\times63$ minor has determinant residues
\[
\begin{array}{c|ccccc}
 p&101&109&113&137&149\\\hline
 \\[-10pt]
 \sqrt{-1}\pmod p&10&33&15&37&44\\
 \det&50&97&16&18&141
\end{array}
\]
respectively.  Every residue is nonzero.  The entries are Gaussian rationals with denominator a power of $25$, and none of these primes divides $25$.  The Gaussian-integer version of \Cref{lem:finitefield} is obtained by reducing $\I$ to the square root of $-1$.  Therefore the minor is nonzero over $\Q(\I)$ and the real Jacobian rank is $63$.
\end{proof}

\section{Discussion}

\subsection{Interpretation for circuit compilation}

The full-rank result has a direct interpretation for circuit compilation.  At an explicit six-gate witness, the differential spans all $63$ tangent directions of $\SU(8)$.  Hence targets sufficiently close to the corresponding circuit output can, in principle, be compiled by solving a locally well-posed system of $63$ equations. The theorem rules out a structural rank deficiency of the six-block template. It does not, however, imply that an arbitrary numerical optimizer will recover suitable parameters from an arbitrary initialization.

The fixed-path result is particularly relevant for hardware-constrained compilation.  Local universality does not require changing the pairwise placement of the gates or introducing SWAP gates merely to obtain sufficient local expressivity.  In particular, the nearest-neighbor architecture already attains full rank with six blocks.  Thus the parameter-count threshold is attainable even on a fixed three-qubit line. In this sense, six-gate local universality is architecture-robust at the level of support words.

The conclusion is nevertheless local rather than global. What the present result establishes is that any obstruction to global synthesis with six $\SU(4)$ gates cannot arise from insufficient parameter count, nearest-neighbor connectivity, or generic local differential rank. Such an obstruction would have to be genuinely global.

\subsection{Open problems}

The distinction between local and global universality leaves several concrete directions for further work.

\begin{enumerate}[leftmargin=2.2em]

\item \textbf{Global six-gate surjectivity.}  Determine whether some, every, or no reduced six-slot support word has image equal to $\SU(8)$.

\item \textbf{Constructive inversion.}  Full rank guarantees local solvability but does not itself give a practical compiler.  Developing a stable symbolic or numerical procedure that maps a target in a certified open neighborhood to six $\SU(4)$ parameters, ideally with a convergence guarantee, would turn the local universality theorem into a constructive synthesis result.

\item \textbf{More qubits.}  It is natural to ask for analogous accessible-dimension laws for fixed architectures on four or more qubits. Exhaustive support enumeration and Jacobian certificates remain conceivable, but both the number of architectures and the size of the required certificates grow substantially.

\end{enumerate}

\section{Conclusion}

The six-gate parameter threshold for a general three-qubit unitary has been visible for decades, but parameter counting alone could not show that the threshold is attainable.  The certificates in this work close that local gap.  Six arbitrary two-qubit gates have a full-rank differential on a fixed three-qubit line, and this is not an accident of one arrangement: every reduced six-slot support word is locally universal.

The result gives a minimal, architecture-robust local template for three-qubit compilation.  It also sharpens the remaining global problem.  If six gates ever fail to synthesize some three-qubit unitaries, the reason is not missing parameters, nearest-neighbor connectivity, or local Jacobian rank.  The obstruction must be global.

\appendix
\crefalias{section}{appendix}

\section{Human-readable certificate}\label{app:recipes}

The notation $PQ(t)$ means the two-qubit Pauli rotation $R_{P\otimes Q}(\theta)$ with $t=\tan(\theta/4)$.  Entries on each gate row are listed in the order stored by the certificate. The rightmost listed rotation is leftmost in the resulting matrix product.

\footnotesize


\section{Machine-readable certificate}\label{app:manifest}

%

\bibliographystyle{unsrt}
\bibliography{main}

\end{document}